\def \h#1{\widehat{#1}}
\def \t#1{\widetilde{#1}}
\def \b#1{\overline{#1}}
\def \th#1{\widehat{\widetilde{#1}}}
\def \hb#1{{\widehat{\overline{#1}}}}
\def \tb#1{\widetilde{\overline{#1}}}
\numberwithin{equation}{section}
\newtheorem{prop}{Proposition}
\newcommand{\balpha}{\boldsymbol{\alpha}}
\newcommand{\bfg}{\mathbf{g}}
\newcommand{\bfy}{\mathbf{y}}
\newcommand{\bft}{\mathbf{t}}
\title{Deriving conservation laws for ABS lattice equations\\ from Lax pairs }
\author{Da-jun Zhang\footnote{Corresponding author. E-mail: djzhang@staff.shu.edu.cn},
~~ Jun-wei Cheng,~~Ying-ying Sun\\
{\small\it Department of Mathematics, Shanghai University, Shanghai 200444,  P.R. China}}
\begin{document}

\maketitle

\begin{abstract}

In the paper we derive infinitely many conservation laws for the ABS
lattice equations from their Lax pairs. These conservation laws can
algebraically be expressed by means of some known polynomials. We
also show that H1, H2, H3, Q1, Q2, Q3 and A1 equation in ABS list
share a generic discrete  Riccati equation.

\vskip 6pt
\noindent
{\bf Keywords}: conservation laws, Lax pairs, ABS lattice equations, discrete integrable systems

\vskip 6pt
\noindent
{PACS:} 02.30.Ik, 02.90.+p\\
{MSC:} 39-04, 39A05, 39A14
\end{abstract}

\section{Introduction}

The celebrated ABS list\cite{ABS-2003} consists of totally  nine
quadrilateral lattice equations which are consistent around a
cube(CAC) (3D-consistency with extra restriction: quasi-linearity,
$D_4$ symmetry and Tetrahedron property).
A CAC equation itself provides a B\"{a}cklund transformation(BT) as
well as a Lax pair\cite{ABS-2003}. However, such a Lax pair seems
hard to play a same role as useful as in continuous case, where from
Lax pairs usually one can derive solutions either through Inverse
Scattering Transform\cite{Ab-IST} or Darboux
transformation\cite{Matveev-DT}, construct BTs\cite{wsk}, evolution
equation hierarchy, commutative flows and infinitely many
symmetries\cite{LZ-1986} and infinitely many conservation
laws\cite{wsk}.

With regard to infinitely many conservation laws which serve as an
integrability characteristic, there are many ways to derive them for
continuous and semi-discrete integrable
systems\cite{wsk,ko-cl,za-sh,ts-wa,zc,zwxz}. Using Lax pair is a
simple way\cite{wsk,zc,zwxz}. For  ABS lattice equations, their
conservation laws have been derived through a direct approach\cite{rh}
based on the idea of \cite{h}, symmetry
approach\cite{RH-JPA-2007,x-2011,rs-2009}, Gardner's approach (using
BTs and initial conservation laws)\cite{rs-2009,Gardner-1}, and
using quasi-difference operators and recursion
operators\cite{mwx-2011}, etc. In this paper, we will start from Lax
pairs to derive infinitely many conservation laws for ABS lattice
equations. In Ref.\cite{zc} (also see \cite{zwxz}) we introduce two
kinds of techniques to construct conservation laws respectively for
the Toda lattice and Ablowitz-Ladik system. In fact,  Gardner method
used in \cite{Gardner-1} is closely related to the technique used
for the Ablowitz-Ladik system\cite{zc}. However, one will see that
we can easily write out the so-called initial conservation laws from
Lax pairs, and infinitely many conservations laws can algebraically
be expressed by means of known polynomials. Besides, we also find
many ABS lattice equations share a generic discrete  Riccati
equation.

We organize the paper as follows. In Sec.2 we introduce ABS list and
the main idea of our approach. In Sec.3 first H1 equation serves as
a detailed example to derive conservation laws. Then we list main
results for H2, H3, Q1, Q2, Q3 and A1 equation. Finally, also in
this section, we derive conservation laws for A2 and Q4 equation in
a slightly different way but still starting from Lax pairs.

\section{Preliminary and main idea}\label{sec:2}

Let us start from the following quadrilateral equation
\begin{equation}\label{deq}
Q(u,\t{u},\h{u},\th{u},p,q)=0,
\end{equation}
where
\[u=u(n,m),~~\t{u}=E_nu=u(n+1,m),~~\h{u}=E_mu=u(n,m+1),~~\th u=u(n+1,m+1),\]
$E_n$ and $E_m$ respectively serve as  shift operators in direction
$n$ and $m$, $p$ and $q$ are spacing parameters of direction $n$ and
$m,$ respectively.

The ABS list reads\cite{ABS-2003}
\begin{align*}
& (u-\th{u})(\t{u}-\h{u})+q-p=0, \tag{H1}\\
& (u-\th{u})(\t{u}-\h{u})+(q-p)(u+\t{u}+\h{u}+\th{u})+q^2-p^2=0, \tag{H2}\\
& p(u\t{u}+\h{u}\th{u})-q(u\h{u}+\t{u}\th{u})+\delta(p^2-q^2)=0, \tag{H3}\\
& p(u+\h{u})(\t{u}+\th{u})-q(u+\t{u})(\h{u}+\th{u})-\delta^2pq(p-q)=0, \tag{A1}\\
& (q^2-p^2)(u\t{u}\h{u}\th{u}+1)+q(p^2-1)(u\h{u}+\t{u}\th{u})-p(q^2-1)(u\t{u}+\h{u}\th{u})=0, \tag{A2}\\
& p(u-\h{u})(\t{u}-\th{u})-q(u-\t{u})(\h{u}-\th{u})+\delta^2pq(p-q)=0, \tag{Q1}\\
& p(u-\h{u})(\t{u}-\th{u})-q(u-\t{u})(\h{u}-\th{u})+pq(p-q)(u+\t{u}+\h{u}+\th{u})\notag\\
& ~~~~~ -pq(p-q)(p^2-pq+q^2)=0, \tag{Q2}\\
& (q^2-p^2)(u\th{u}+\t{u}\h{u})+q(p^2-1)(u\t{u}+\h{u}\th{u})-p(q^2-1)(u\h{u}+\t{u}\th{u})\notag\\
& ~~~~~ -\delta^2(p^2-q^2)(p^2-1)(q^2-1)/(4pq)=0, \tag{Q3}\\
& p(u\t{u}+\h u\th{u})-q(u\h u+\t u\th u)=\frac{pQ-qP}{1-p^2q^2}\bigl[(\h u\t u+u\th u)-pq(1+u\t u\h u\th u)\bigr], \tag{Q4}
\end{align*}
where Q4 equation is of the form given by
Hietarinta\cite{Hietarinta-CAC}, and in the equation $P^2=p^4-k p^2+1,~Q^2=q^4-k q^2+1$
with parameter $\delta$.

If \eqref{deq} is a CAC equation, then it is easy to write out its
BT
\begin{subequations}\label{BT}
\begin{align}
& Q(u,\t{u},\b{u},\tb{u},p,r)=0,\\
& Q(u,\b{u},\h{u},\hb{u},r,q)=0,
\end{align}
\end{subequations}
where $r$ serves as a soliton parameter, and if $u$ solves
\eqref{deq}, so does $\b u$. Replacing $\b u$ by $\phi_1/\phi_2$,
the above BT can be rewritten in terms of $\phi=(\phi_1,\phi_2)^T$
as the following,
\begin{subequations}
\label{Lax}
\begin{align}
& \t{\phi}=\frac{\beta}{\sqrt{|M|}}M(u,\t u, p,r)\phi,\label{Lax-a}\\
& \h \phi =\frac{\gamma}{\sqrt{|N|}}N(u,\h{u},q,r)\phi,\label{Lax-b}
\end{align}
\end{subequations}
where $M$ and $N$ are $2\times 2$ matrices, $\frac{1}{\sqrt{|M|}}$
and $\frac{1}{\sqrt{|N|}}$ are to guarantee the consistency, and
$\beta$ and $\gamma$ are constants that can be arbitrary but play
key roles to get a solvable discrete Riccati equation. Equations
\eqref{Lax} can be a Lax pair of equation \eqref{deq} and  $r$
serves as a spectral parameter. A conservation law of \eqref{deq} is
defined by
\begin{equation}
\Delta_m F(u)=\Delta_n J(u), \label{CL-def}
\end{equation}
where $\Delta_n=E_n-1$,  $\Delta_m=E_m-1$ and $u$ solves the
equation \eqref{deq}.

From the Lax pair \eqref{Lax} we construct a formal conservation law
in the following way. First we define
\begin{equation}
\theta=\frac{\t{\phi}_2}{\phi_2},~~\eta=\frac{\h{\phi}_2}{\phi_2}.
\label{theta-eta}
\end{equation}
Then noting that
\begin{equation}
\ln\theta=\Delta_n \ln \phi_2,~~\ln\eta=\Delta_m \ln \phi_2,
\end{equation}
we immediately reach to
\begin{equation}
\Delta_m\ln\theta=\Delta_n \ln\eta, \label{CLs-formal}
\end{equation}
which is a formal conservation law of the lattice equation related
to the Lax pair \eqref{Lax}.

We find that for lattice equations H1, H2, H3, A1, Q1, Q2, and Q3 in
ABS list, $\theta$ satisfies a discrete Riccati equation of the
 following form,\footnote{
In fact, supposing that
$M=\Bigl(\begin{array}{cc}A&B\\C&D\end{array}\Bigr)$ in
\eqref{Lax-a}, one can always get
\begin{equation}
\sqrt{|\t M|}\, \t {\t \phi}_2=\beta \t C \Bigl(\frac{A}{C}-\frac{\t
D}{\t C}\Bigr)\t \phi_2-\beta^2\sqrt{|M|}\frac{\t C}{C}\,\phi_2,
\label{Ri-phi2}
\end{equation}
which, devided by $\phi_2$, yields a discrete Riccati equation of
$\theta$. }
\begin{equation}\label{Ri-ger}
\t\mu \,\t{\theta}\theta=(u-\t{\t{u}})\theta-\varepsilon^{2}\mu,
\end{equation}
where $\mu$ is a function of $u, \t u$ related to considered
equations, $\varepsilon$ is a constant related to $p,r$. It is not
difficult to verify that
\begin{prop}\label{P:1}
The discrete Riccati equation \eqref{Ri-ger} is solved by
\begin{subequations}\label{theta}
\begin{equation}
\theta=\varepsilon^2 \rho
\Bigl(1+\sum_{j=1}^{\infty}\theta_j\varepsilon^{2j}\Bigr),
\end{equation}
with
\begin{align}
& \rho=\frac{\mu}{u-\t{\t{u}}},\\
& \theta_{j+1}=\frac{\t\mu \t
\rho}{u-\t{\t{u}}}\sum_{i=0}^{j}\t{\theta}_i\theta_{j-i},~~
j=0,1,2,\cdots,~~(\theta_0=1).
\end{align}
\end{subequations}
\end{prop}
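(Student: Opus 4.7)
The plan is to treat the claim as a formal power series identity in $\varepsilon^2$: substitute the proposed series for $\theta$ into the Riccati equation \eqref{Ri-ger} and match coefficients of like powers of $\varepsilon^2$, then check that the resulting relations are precisely the recursion defining the $\theta_j$.

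More concretely, first I would compute the right-hand side of \eqref{Ri-ger}. Using $\rho=\mu/(u-\t{\t u})$, the leading $\varepsilon^2$-term of $(u-\t{\t u})\theta$ is exactly $\varepsilon^2\mu$, which cancels the $-\varepsilon^2\mu$. Thus
\[
(u-\t{\t u})\theta-\varepsilon^2\mu=\mu\sum_{j\ge 1}\theta_j\,\varepsilon^{2(j+1)}=\mu\sum_{k\ge 2}\theta_{k-1}\,\varepsilon^{2k}.
\]
Next I would expand the left-hand side. Since $\t\theta=\varepsilon^2\t\rho(1+\sum_{i\ge 1}\t\theta_i\varepsilon^{2i})$, the Cauchy product gives
\[
\t\theta\,\theta=\varepsilon^4\t\rho\rho\sum_{n\ge 0}\Bigl(\sum_{i=0}^{n}\t\theta_i\theta_{n-i}\Bigr)\varepsilon^{2n},
\]
with the convention $\theta_0=\t\theta_0=1$, so
\[
\t\mu\t\theta\,\theta=\t\mu\t\rho\rho\sum_{k\ge 2}\Bigl(\sum_{i=0}^{k-2}\t\theta_i\theta_{k-2-i}\Bigr)\varepsilon^{2k}.
\]

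Equating the coefficients of $\varepsilon^{2k}$ for each $k\ge 2$ and dividing by $\mu$ (using $\rho/\mu=1/(u-\t{\t u})$) yields
\[
\theta_{k-1}=\frac{\t\mu\t\rho}{u-\t{\t u}}\sum_{i=0}^{k-2}\t\theta_i\theta_{k-2-i},
\]
which is exactly the stated recursion after relabeling $j=k-2$. No coefficient on the left-hand side has a positive-power-of-$\varepsilon^2$ term with $k<2$, and the $k=0,1$ matching on the right is trivially $0=0$ after the cancellation above, so every order is accounted for.

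There is no real obstacle here beyond careful bookkeeping: the verification is a direct Cauchy-product computation, and the construction is essentially tautological since the recursion in \eqref{theta} is produced by the very procedure of solving \eqref{Ri-ger} order-by-order in $\varepsilon^2$. The only place where one must be slightly attentive is the shift in the index after $\t\theta\theta$ carries an extra $\varepsilon^4$, which is what makes the diagonal-recursion $\theta_{j+1}$ (rather than $\theta_j$) appear on the left of the recursion.
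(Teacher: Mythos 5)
Your verification is correct: substituting the series into \eqref{Ri-ger}, using $(u-\t{\t u})\rho=\mu$ to cancel the order-$\varepsilon^2$ terms, and matching the Cauchy-product coefficients at each order $\varepsilon^{2k}$, $k\ge 2$, reproduces exactly the recursion for $\theta_{j+1}$. This is precisely the order-by-order check the paper leaves to the reader (it only remarks that the proposition ``is not difficult to verify''), so your argument coincides with the intended proof.
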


This gives an explicit form of $\theta$, but it is not enough to get
infinitely many conservation laws from \eqref{CLs-formal}. We still
need an explicit  $\eta$. However, we can not insert $\eta$ into a
Riccati equation similar to \eqref{Ri-ger} with same $\varepsilon$
because $\varepsilon$ is independent of $q$. Fortunately, from the
Lax pair \eqref{Lax} we may find the following relation
\begin{equation}
\eta=\omega(\sigma \theta+1), \label{eta-theat}
\end{equation}
where both $\omega$ and $\sigma$ are functions of $(u,\t u, \h
u,p,q)$ and they satisfy
\begin{equation}
\frac{1}{\omega(u,\t u, \h u,p,q)}=-\sigma(u,\h u, \t u,q,p).
\label{sym}
\end{equation}
Thus, substituting the above $\eta$ together with \eqref{theta} into
the formal conservation law \eqref{CLs-formal}
it is possible to get explicit infinitely many conservation laws. To
do that, we make use of the following expansion formula.
\begin{prop}\label{P:2}
The following expansion holds,
\begin{subequations}\label{exp-t}
\begin{equation}
\ln\biggl(1+\sum_{i=1}^{\infty}t_i k^i
\biggr)=\sum_{j=1}^{\infty}h_j(\bft)k^{j}, \label{ht}
\end{equation}
where
\begin{equation}
h_j(\bft)=\sum_{||\balpha||=j}(-1)^{|\balpha|-1}(|\balpha|-1)!\frac{\bft^{\balpha}}{\balpha
!}, \label{htj}
\end{equation}
and
\begin{align}
& \mathbf{t}=(t_1,t_2,\cdots),~~\balpha=(\alpha_1,\alpha_2,\cdots),~~\alpha_i\in\{0,1,2,\cdots\}, \\
& \bft^{\balpha}=\prod_{i=1}^{\infty}t_i^{\alpha_i},~~
 {\balpha}!=\prod_{i=1}^{\infty}(\alpha_i !),~~
|\balpha|=\sum_{i=1}^{\infty}\alpha_i,~~
||\balpha||=\sum^{\infty}_{i=1} i\alpha_i.
\end{align}
\end{subequations}
The first few of $\{h_j(\bft)\}$ are
\begin{subequations}\label{ht1-4}
\begin{align}
& h_1(\bft)=t_1,\\
& h_2(\bft)=-\frac{1}{2}t_1^2+t_2, \\
& h_3(\bft)=\frac{1}{3}t_1^3-t_1t_2+t_3,\\
& h_4(\bft)=-\frac{1}{4}t_1^4+t_1^2t_2-t_1t_3-\frac{1}{2}t_2^2+t_4.
\end{align}
\end{subequations}
We note that $\{h_j(\bft)\}$ also satisfy
\begin{equation}
\partial_{t_s}h_{i+s}(\bft)=\partial_{t_j}h_{i+j}(\bft), ~~~\mathrm{for}~i,j,s\in \mathbb{Z}^+,
\end{equation}
and $\{h_j(\bft)\}$ are different from the Schur function (See
\cite{Sato-ele}).
\end{prop}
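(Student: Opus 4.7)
The plan is to derive the identity \eqref{ht}--\eqref{htj} by combining the Taylor expansion of $\ln(1+x)$ with the multinomial theorem, and then to obtain the explicit formulas \eqref{ht1-4} and the partial-derivative identity as immediate corollaries.

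First I would set $x=\sum_{i=1}^{\infty}t_i k^i$, viewed as an element of a formal power series ring with no constant term, and substitute into $\ln(1+x)=\sum_{n=1}^{\infty}\frac{(-1)^{n-1}}{n}\,x^n$. Next, I would apply the multinomial theorem to each power: $x^n=\sum_{|\balpha|=n}\frac{n!}{\balpha!}\,\bft^{\balpha}\,k^{||\balpha||}$, with the sum taken over all multi-indices with $|\balpha|=n$. Collecting terms of fixed total weight $||\balpha||=j$, the coefficient of $k^j$ in $\ln(1+x)$ becomes $\sum_{||\balpha||=j}\frac{(-1)^{|\balpha|-1}}{|\balpha|}\cdot\frac{|\balpha|!}{\balpha!}\,\bft^{\balpha}=\sum_{||\balpha||=j}(-1)^{|\balpha|-1}(|\balpha|-1)!\,\frac{\bft^{\balpha}}{\balpha!}$, which is precisely the formula \eqref{htj} for $h_j(\bft)$.

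The explicit expressions \eqref{ht1-4} then follow by enumerating the multi-indices $\balpha$ with $||\balpha||=j$ for $j=1,2,3,4$ and summing the corresponding terms; for example, for $j=4$ one sums over $\balpha=(4,0,\dots),(2,1,0,\dots),(1,0,1,0,\dots),(0,2,0,\dots),(0,0,0,1,0,\dots)$. For the symmetry relation $\partial_{t_s}h_{i+s}(\bft)=\partial_{t_j}h_{i+j}(\bft)$, I would differentiate the generating-function identity $\sum_{j=1}^{\infty}h_j(\bft)\,k^j=\ln\bigl(1+\sum_{i=1}^{\infty}t_i k^i\bigr)$ with respect to $t_s$, obtaining $\sum_{j=1}^{\infty}\partial_{t_s}h_j(\bft)\,k^j=k^s\bigl/\bigl(1+\sum_{i=1}^{\infty}t_i k^i\bigr)$. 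Extracting the coefficient of $k^{i+s}$ expresses $\partial_{t_s}h_{i+s}(\bft)$ as the coefficient of $k^i$ in $\bigl(1+\sum_i t_i k^i\bigr)^{-1}$, a quantity depending only on $i$ and $\bft$; the same computation with $s$ replaced by $j$ gives the same right-hand side, yielding the claimed identity.

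The main obstacle is purely bookkeeping: one must justify rearranging the double sum over $n$ and over multi-indices. Because we work inside the formal power series ring $\mathbb{Q}[[\bft]][[k]]$, no analytic convergence is required; for each fixed $j$ only the multi-indices with $||\balpha||=j$ (hence with $|\balpha|\le j$) contribute, so every coefficient of $k^j$ is a finite sum and the interchange of summations is legitimate.
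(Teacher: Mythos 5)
Your proof is correct and follows essentially the same route as the paper: expand $\ln(1+x)$ with $x=\sum_i t_i k^i$, apply the multinomial expansion of $\bigl(\sum_i y_i\bigr)^s$, and collect terms of fixed weight $||\balpha||=j$. The only difference is that you invoke the multinomial theorem directly (and add the formal-power-series justification and the proof of the derivative identity, which the paper merely states), whereas the paper proves the multinomial-type identity itself by induction on $s$ before rearranging.
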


\begin{proof}
Let us first prove the following expansion:
\begin{subequations}
\begin{align}
& \biggl(\sum_{i=1}^{\infty}y_i\biggr)^s=s! \,
\sum_{|\balpha|=s}\frac{\mathbf{y}^{\balpha}}{\alpha !},~~~ (s\in
\mathbb{Z}^+),\label{exp-y}
\\
&
\mathbf{y}=(y_1,y_2,\cdots),~~\bfy^{\balpha}=\prod_{i=1}^{\infty}y_i^{\alpha_i}.
\end{align}
\end{subequations}
Obviously, \eqref{exp-y} is valid for $s=1$. Taking derivative for
\eqref{exp-y} w.r.t. $y_k$ and supposing \eqref{exp-y} is right for
$s-1$, one finds
\begin{equation*}
\frac{\partial}{\partial y_k}l.h.s.~ \mathrm{of}~
\eqref{exp-y}=s\biggl(\sum^{\infty}_{j=1}y_j\biggr)^{s-1}
=s\cdot(s-1)! \sum_{|\balpha|=s-1}\frac{\mathbf{y}^{\balpha}}{\balpha!}
=s! \sum_{|\balpha|=s-1}\frac{\mathbf{y}^{\balpha}}{\balpha!},
\end{equation*}
and meanwhile
\begin{equation*}
\frac{\partial}{\partial y_k}r.h.s.~\mathrm{of}~\eqref{exp-y}
=s!\sum_{|\balpha|=s}\frac{y_k^{\alpha_k-1}}{(\alpha_k-1)!}\prod_{j\neq
k}\frac{y_j^{\alpha_j}}{\alpha_j!}
=s!\sum_{|\balpha|=s-1}\frac{\mathbf{y}^{\balpha}}{\balpha!}.
\end{equation*}
That means \eqref{exp-y} is valid for any $s\in \mathbb{Z}^+$. Then,
noting that
\begin{equation*}
\ln\biggl(1+\sum_{i=1}^{\infty}t_i k^i
\biggr)=\sum_{s=1}^{\infty}(-1)^{s-1}\frac{1}{s}\biggl(\sum_{i=1}^{\infty}t_i
k^i \biggr)^s,
\end{equation*}
using \eqref{exp-y} with $y_i=t_i k^i$, and rearranging the
expansion in terms of $k$, we get \eqref{ht}.
\end{proof}

With this Proposition, from the formal conservation law
\eqref{CLs-formal} we have
\begin{prop}\label{P:3}
When $\theta$ and $\eta$ are defined by \eqref{theta} and
\eqref{eta-theat}, respectively, the formal conservation law
\eqref{CLs-formal} yields infinitely many conservation laws,
\begin{subequations}\label{CLs-inf}
\begin{align}
& \Delta_m \ln \rho=\Delta_n \ln \omega,\\
& \Delta_m\, h_s(\boldsymbol\theta)=\Delta_n\,
h_s(\rho\sigma\underline{\boldsymbol\theta}),~~ (s=1,2,3,\cdots),
\end{align}
where $h_s(\bft)$ is defined in \eqref{htj},
\begin{equation}
\boldsymbol\theta=(\mathbf{\theta}_1,\theta_2,\cdots),~~~
\underline{\boldsymbol\theta}=(1,\theta_1,\theta_2,\cdots),
\end{equation}
and $\rho\sigma\underline{\boldsymbol\theta}=(\sigma\rho,\sigma\rho\theta_1,\sigma\rho\theta_2,\cdots)$.
\end{subequations}

\end{prop}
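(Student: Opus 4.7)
The plan is to expand both sides of the formal conservation law \eqref{CLs-formal} as formal power series in $\varepsilon^2$ and match coefficients of like powers. Since $\theta$ from Proposition \ref{P:1} begins at order $\varepsilon^2$ and $\eta$ is related to $\theta$ algebraically via \eqref{eta-theat}, both $\ln\theta$ and $\ln\eta$ are immediately amenable to the logarithmic expansion \eqref{ht} of Proposition \ref{P:2}; equating coefficients of $\varepsilon^{2s}$ will then yield the desired infinite family of conservation laws.

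First I would write $\theta=\varepsilon^2\rho\bigl(1+\sum_{j\ge 1}\theta_j\varepsilon^{2j}\bigr)$ from Proposition \ref{P:1} and apply \eqref{ht} with $k=\varepsilon^2$ and $t_i=\theta_i$ to obtain
\[
\ln\theta \;=\; \ln(\varepsilon^2\rho)+\sum_{s=1}^{\infty}h_s(\boldsymbol\theta)\,\varepsilon^{2s}.
\]
Next, substituting the same series into $\eta=\omega(1+\sigma\theta)$ from \eqref{eta-theat} gives
\[
1+\sigma\theta \;=\; 1+\sigma\rho\,\varepsilon^2+\sum_{j\ge 1}\sigma\rho\theta_j\,\varepsilon^{2(j+1)},
\]
whose nontrivial coefficients (read off with the convention $\theta_0=1$) are precisely the entries of $\rho\sigma\underline{\boldsymbol\theta}=(\sigma\rho,\sigma\rho\theta_1,\sigma\rho\theta_2,\ldots)$. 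A second application of \eqref{ht} therefore yields
\[
\ln\eta \;=\; \ln\omega+\sum_{s=1}^{\infty}h_s(\rho\sigma\underline{\boldsymbol\theta})\,\varepsilon^{2s}.
\]

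Applying $\Delta_m$ to the first expansion and $\Delta_n$ to the second and using \eqref{CLs-formal}, and noting that $\varepsilon$ is independent of both $n$ and $m$ so that $\Delta_m\ln\varepsilon^2=\Delta_n\ln\varepsilon^2=0$, we obtain
\[
\Delta_m\ln\rho+\sum_{s=1}^{\infty}\Delta_m h_s(\boldsymbol\theta)\,\varepsilon^{2s} \;=\; \Delta_n\ln\omega+\sum_{s=1}^{\infty}\Delta_n h_s(\rho\sigma\underline{\boldsymbol\theta})\,\varepsilon^{2s}.
\]
Since $\varepsilon$ (inherited from the spectral parameter $r$) is a free formal variable, matching coefficients of each power $\varepsilon^{2s}$ separately produces the $s=0$ identity $\Delta_m\ln\rho=\Delta_n\ln\omega$ together with the infinite sequence $\Delta_m h_s(\boldsymbol\theta)=\Delta_n h_s(\rho\sigma\underline{\boldsymbol\theta})$ for $s\ge 1$, which is exactly \eqref{CLs-inf}.

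The only real obstacle is purely bookkeeping: keeping track of the order-by-order index shift caused by the leading $\varepsilon^2$ factor in $\theta$ when it is inserted into $\sigma\theta$, so that $\sigma\rho$ sits at order $\varepsilon^2$ while $\sigma\rho\theta_j$ sits at order $\varepsilon^{2(j+1)}$. This is precisely what the notation $\underline{\boldsymbol\theta}=(1,\theta_1,\theta_2,\ldots)$ is designed to absorb, so once the convention $\theta_0=1$ is adopted the identification of the $\eta$-side arguments of $h_s$ is immediate, and no further analytic work is needed beyond the two applications of Proposition \ref{P:2}.
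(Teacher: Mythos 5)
Your proposal is correct and follows exactly the route the paper intends: expand $\ln\theta=\ln\varepsilon^2+\ln\rho+\sum_s h_s(\boldsymbol\theta)\varepsilon^{2s}$ and $\ln\eta=\ln\omega+\sum_s h_s(\rho\sigma\underline{\boldsymbol\theta})\varepsilon^{2s}$ via Proposition \ref{P:2}, insert into \eqref{CLs-formal}, and equate coefficients of powers of the free parameter $\varepsilon$. The bookkeeping of the index shift through $\underline{\boldsymbol\theta}$ is handled correctly, so no gap remains.
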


\section{Conservation laws of ABS lattice equations}

\subsection{Conservation laws for H1, H2, H3, Q1, Q2, Q3 and A1 equation}

Let us first, taking H1 equation as an example, give a detailed
procedure of deriving infinitely many conservation laws.

The Lax pair of H1 equation reads
\begin{subequations}\label{Lax-H1}
\begin{align}\label{phi-t-H1}
& \t{\phi}= \frac{\beta}{\sqrt{r-p}}\left(\begin{array}{cc}
      u     & -u\t{u}+p-r\\
      1     & -\t{u}
      \end{array}
\right)\phi,\\
\label{phi-h-H1} & \h{\phi}=\frac{\gamma}{\sqrt{r-q}}\left(
\begin{array}{cc}
u & -u\h{u}+q-r\\
1 & -\h{u}
\end{array} \right)\phi.
\end{align}
\end{subequations}
Taking $\beta=\sqrt{r-p}=\varepsilon$, from \eqref{phi-t-H1} we can
find
\begin{equation}
\t{\t{\phi}}_2=(u-\t{\t{u}}) \t{\phi}_{2}-\varepsilon^{2}\phi_{2},
\label{tt-H1}
\end{equation}
which leads to
\begin{equation}
\t{\theta}\theta=(u-\t{\t{u}})\theta-\varepsilon^{2}, \label{Ri-H1}
\end{equation}
where $\theta$ is defined in \eqref{theta-eta}, i.e.,
$\theta=\t{\phi}_{2}/\phi_{2}$. This is the discrete Riccati
equation \eqref{Ri-ger} with $\mu=1$ and is solved by \eqref{theta}.
Further, taking $\gamma=\sqrt{r-q}$, from the Lax pair
\eqref{Lax-H1} we have
\begin{equation}
\theta=\frac{\t{\phi}_2}{\phi_2}=\frac{\phi_1}{\phi_2}-\t{u},~~
\eta=\frac{\h{\phi}_2}{\phi_2}=\frac{\phi_1}{\phi_2}-\h{u},
\end{equation}
which by eliminating $\phi_1/\phi_2$ yields the relation
\begin{equation}
\eta=\theta+\t{u}-\h{u}.
\end{equation}
This is \eqref{eta-theat} with $\omega=\t{u}-\h{u}$ and $\sigma=1/(\t{u}-\h{u})$. Thus, for H1 equation, based on Proposition
\ref{P:3} we can write out infinitely many conservation laws
\eqref{CLs-inf} with $\mu,~ \omega,~ \sigma$ obtained above.
We note that these conservation laws are as same as those derived via Gardner method\cite{rs-2009}

For the lattice equations H2, H3, A1, Q1, Q2, and Q3, starting from
their Lax pairs, we can also derive infinitely many conservation
laws through a similar procedure. Let us skip details and list main
results of these equations together with H1 equation.

\begin{prop}\label{P:4}
For the lattice equations H1, H2, H3, A1, Q1, Q2, and Q3 in ABS
list, starting from their Lax pairs, one can construct a formal
conservation law \eqref{CLs-formal} with $\theta$ and $\eta$ defined
in \eqref{theta-eta}, where $\theta$ satisfies the discrete Riccati
equation \eqref{Ri-ger} solved by \eqref{theta} and $\eta$ is
expressed through \eqref{eta-theat}. By means of the polynomials
$\{h_j(\bft)\}$ defined in \eqref{htj}, one can explicitly express
the infinitely many conservation laws as \eqref{CLs-inf}. In the
following for H1, H2, H3, A1, Q1, Q2, and Q3 equation, we list out
their Lax pairs, parametrisation of $\beta,~\gamma$, and auxiliary
functions $\mu,~ \omega$ and $\sigma$. For H1 equation, its Lax pair
reads \eqref{Lax-H1} with
\begin{equation}
\beta=\sqrt{r-p}=\varepsilon,~~\gamma=\sqrt{r-q},~~\mu=1,~~
\omega=\t{u}-\h{u},~~ \sigma=\frac{1}{\t{u}-\h{u}}.
\end{equation}
For H2 equation, its Lax pair reads
\begin{subequations}
\begin{align}
\label{phi-t-H2} & \t{\phi}=\frac{\beta}{\sqrt{2(r-p)(p+u+\t{u})}}
\left(
\begin{array}{cc}
u+p-r & -u\t{u}+(p-r)(u+\t{u})+p^2-r^2\\
1 & -\t{u}-p+r
\end{array}
\right)\phi,\\
\label{phi-h-H2} &
\h{\phi}=\frac{\gamma}{\sqrt{2(r-q)(q+u+\h{u})}}\left(
\begin{array}{cc}
u+q-r & -u\h{u}+(q-r)(u+\h{u})+q^2-r^2\\
1 & -\h{u}-q+r
\end{array}
\right)\phi,
\end{align}
\end{subequations}
and
\begin{subequations}
\begin{align}
& \beta=\sqrt{2(r-p)}=\varepsilon,~~\gamma=\sqrt{2(r-q)},\\
& \mu=\sqrt{p+u+\t{u}},~~
\omega=\frac{p-q+\t{u}-\h{u}}{\sqrt{q+u+\h{u}}},~~
\sigma=\frac{\sqrt{p+u+\t{u}}}{p-q+\t{u}-\h{u}}.
\end{align}
\end{subequations}
For H3 equation, its Lax pair reads
\begin{subequations}
\begin{align}
\label{phi-t-H3} &
\t{\phi}=\frac{\beta}{\sqrt{(p^2-r^2)(u\t{u}+p\delta)}} \left(
\begin{array}{cc}
ru & -pu\t{u}-\delta(p^2-r^2)\\
p & -r\t{u}
\end{array}
\right)\phi,\\
\label{phi-h-H3} &
\h{\phi}=\frac{\gamma}{\sqrt{(q^2-r^2)(u\h{u}+q\delta)}}\left(
\begin{array}{cc}
ru & -qu\h{u}-\delta(q^2-r^2)\\
q & -r\h{u}
\end{array}
\right)\phi,
\end{align}
\end{subequations}
and
\begin{subequations}
\begin{align}
& \beta=\frac{\sqrt{p^2-r^2}}{r}=\varepsilon,~~\gamma=\frac{\sqrt{q^2-r^2}}{r},\\
& \mu=\sqrt{u\t{u}+p\delta},~~
\omega=\frac{q\t{u}-p\h{u}}{p\sqrt{u\h{u}+q\delta}},~~
\sigma=\frac{q\sqrt{u\t{u}+p\delta}}{q\t{u}-p\h{u}}.
\end{align}
\end{subequations}
For Q1 equation, its Lax pair reads
\begin{subequations}
\begin{align}
\label{phi-t-Q1} &
\t{\phi}=\frac{\beta}{\sqrt{r(r-p)[(u-\t{u})^2-\delta^2p^2]}}
\left(\begin{array}{cc} ru+(p-r)\t{u} & -pu\t{u}-\delta^2pr(p-r)\\
p & (r-p)u-r\t{u}
\end{array}
\right)\phi,\\
\label{phi-h-Q1} &
\h{\phi}=\frac{\gamma}{\sqrt{r(r-q)[(u-\h{u})^2-\delta^2q^2]}}\left(
\begin{array}{cc}
ru+(q-r)\h{u} & -qu\h{u}-\delta^2qr(q-r)\\
q & (r-q)u-r\h{u}
\end{array}
\right)\phi,
\end{align}
\end{subequations}
and
\begin{subequations}
\begin{align}
& \beta=\frac{\sqrt{r(r-p)}}{r}=\varepsilon,~~\gamma=\frac{\sqrt{r(r-q)}}{r},\\
& \mu=\sqrt{(u-\t{u})^2-\delta^2p^2},~~
\omega=\frac{q(\t{u}-u)-p(\h{u}-u)}{p\sqrt{(u-\h{u})^2-\delta^2q^2}},
~~
\sigma=\frac{q\sqrt{(u-\t{u})^2-\delta^2p^2}}{q(\t{u}-u)-p(\h{u}-u)}.
\end{align}
\end{subequations}
For Q2 equation, its Lax pair reads
\begin{subequations}
\begin{align}
\label{phi-t-Q2} & \t{\phi}=\frac{\beta}{\sqrt{A}}
\left(\begin{array}{cc} ru+(p-r)\t{u}-pr(p-r) & -pu\t{u}-pr(p-r)(u+\t{u}-p^2+pr-r^2)\\
p & (r-p)u-r\t{u}+pr(p-r)
\end{array}
\right)\phi,\\
\label{phi-h-Q2} & \h{\phi}=\frac{\gamma}{\sqrt{B}}\left(
\begin{array}{cc} ru+(q-r)\h{u}-qr(q-r) & -qu\h{u}-qr(q-r)(u+\h{u}-q^2+qr-r^2)\\
q & (r-q)u-r\h{u}+qr(q-r)
\end{array}
\right)\phi,
\end{align}
with \begin{align}
A=r(r-p)[(p^2-u)^2+\t{u}(\t{u}-2u-2p^2)],~~
 B=r(r-q)[(q^2-u)^2+\h{u}(\h{u}-2u-2q^2)],
 \end{align}
\end{subequations}
and we take
\begin{subequations}
\begin{align}
&
\beta=\frac{\sqrt{r(r-p)}}{r}=\varepsilon,~~\gamma=\frac{\sqrt{r(r-q)}}{r},~~
\mu=\sqrt{(p^2-u)^2+\t{u}(\t{u}-2u-2p^2)},\\
&
\omega=\frac{(p-q)(u-pq)+q\t{u}-p\h{u}}{p\sqrt{(q^2-u)^2+\h{u}(\h{u}-2u-2q^2)}},
~~
\sigma=\frac{q\sqrt{(p^2-u)^2+\t{u}(\t{u}-2u-2p^2)}}{(p-q)(u-pq)+q\t{u}-p\h{u}}.
\end{align}
\end{subequations}
For Q3 equation, its Lax pair reads
\begin{subequations}
\begin{align}
\label{phi-t-Q3} & \t{\phi}=\frac{\beta}{\sqrt{A}}
\left(\begin{array}{cc} p(r^2-1)u-(r^2-p^2)\t{u} &~~ -r(p^2-1)u\t{u}+\delta^2(p^2-r^2)(p^2-1)(r^2-1)/4pr\\
r(p^2-1) & (r^2-p^2)u-p(r^2-1)\t{u}
\end{array}
\right)\phi,\\
\label{phi-h-Q3} & \h{\phi}=\frac{\gamma}{\sqrt{B}}
\left(\begin{array}{cc} q(r^2-1)u-(r^2-q^2)\h{u} &~~ -r(q^2-1)u\h{u}+\delta^2(q^2-r^2)(q^2-1)(r^2-1)/4qr\\
r(q^2-1) & (r^2-q^2)u-q(r^2-1)\h{u}
\end{array}
\right)\phi,
\end{align}
with
\begin{align}
& A=(r^2-1)(r^2-p^2) [(u-p \t u)(p u-\t u)+\delta^2(1-p^2)^2/(4p) ],\\
& B=(r^2-1)(r^2-q^2) [(u-q \h u)(q u-\h
u)+\delta^2(1-q^2)^2/(4q) ],
\end{align}
\end{subequations}
and we take
\begin{subequations}
\begin{align}
& \beta=\frac{\sqrt{(r^2-1)(r^2-p^2)}}{r^2-1}=\varepsilon,~~\gamma=\frac{\sqrt{(r^2-1)(r^2-q^2)}}{r^2-1},\\
& \mu=\sqrt{(u-p \t u)(p u-\t u)+\delta^2(1-p^2)^2/(4p)},\\
& \omega=\frac{p(q^2-1)\t{u}+(p^2-q^2)u-q(p^2-1)\h{u}}{(p^2-1)\sqrt{(u-q \h u)(q u-\h u)+\delta^2(1-q^2)^2/(4q)}},\\
& \sigma=\frac{(q^2-1)\sqrt{(u-p \t u)(p u-\t
u)+\delta^2(1-p^2)^2/(4p)}}{p(q^2-1)\t{u}+(p^2-q^2)u-q(p^2-1)\h{u}}.
\end{align}
\end{subequations}
For A1 equation, its Lax pair reads
\begin{subequations}
\begin{align}
\label{phi-t-A1} &
\t{\phi}=\frac{\beta}{\sqrt{r(p-r)\bigl[(u+\t{u})^2-\delta^2p^2\bigr]}}
\left(\begin{array}{cc} ru+(r-p)\t{u} & -pu\t{u}+\delta^2pr(p-r)\\
p & (p-r)u-r\t{u}
\end{array}
\right)\phi,\\
\label{phi-h-A1} &
\h{\phi}=\frac{\gamma}{\sqrt{r(q-r)\bigl[(u+\h{u})^2-\delta^2q^2\bigr]}}\left(
\begin{array}{cc}
ru+(r-q)\h{u} & -qu\h{u}+\delta^2qr(q-r)\\
q & (q-r)u-r\h{u}
\end{array}
\right)\phi,
\end{align}
\end{subequations}
and
\begin{subequations}
\label{A1-para}
\begin{align}
& \beta=\frac{\sqrt{r(p-r)}}{r}=\varepsilon,~~\gamma=\frac{\sqrt{r(q-r)}}{r},\\
& \mu=\sqrt{(u+\t{u})^2-\delta^2p^2},~~
 \omega=\frac{q(u+\t{u})-p(u+\h{u})}{p\sqrt{(u+\h{u})^2-\delta^2q^2}},
~~
\sigma=\frac{q\sqrt{(u+\t{u})^2-\delta^2p^2}}{q(u+\t{u})-p(u+\h{u})}.
\end{align}
\end{subequations}
\end{prop}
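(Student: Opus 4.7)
The plan is to verify, equation by equation for H1, H2, H3, A1, Q1, Q2, Q3, three assertions: (i) with the stated $\mu$ and $\varepsilon$, the quantity $\theta=\t{\phi}_2/\phi_2$ satisfies the discrete Riccati equation \eqref{Ri-ger}; (ii) with the stated $\omega$ and $\sigma$, the quantity $\eta=\h{\phi}_2/\phi_2$ admits the representation \eqref{eta-theat}; and (iii) the symmetry relation \eqref{sym} holds. Once (i)--(iii) are in hand for a given equation, Propositions \ref{P:1}, \ref{P:2} and \ref{P:3} immediately produce the infinitely many conservation laws \eqref{CLs-inf}, so no further construction is needed beyond the case checks.

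For (i) the natural tool is the general identity \eqref{Ri-phi2} derived in the footnote: writing the matrix of \eqref{Lax-a} as $M=\bigl(\begin{smallmatrix}A&B\\C&D\end{smallmatrix}\bigr)$ and iterating once, one obtains a scalar second-order recursion for $\t{\t{\phi}}_2$ in terms of $\t{\phi}_2$ and $\phi_2$; dividing by $\t{\phi}_2$ yields a Riccati equation in $\theta$. I would substitute each declared Lax pair into this recursion, normalize by the listed $\beta$, and check two algebraic identities: the coefficient of $\t{\phi}_2$ reduces to $(u-\t{\t{u}})/\t\mu$, and the coefficient of $\phi_2$ reduces to $-\varepsilon^{2}\mu/\t\mu$. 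In every case this works because the determinant $|M|$, as presented in the proposition, is already displayed in the factored form ``spectral-parameter piece times $\mu^{2}$'', which is exactly what is needed for the $\sqrt{|M|}$ normalization to combine with $\beta$ into a clean Riccati coefficient. The H1 computation in the main text realises this in the simplest case $\mu=1$.

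For (ii), both $\theta$ and $\eta$ are M\"obius in $\phi_{1}/\phi_{2}$, with coefficients depending respectively on $(u,\t{u},p,r)$ and $(u,\h{u},q,r)$. Eliminating $\phi_{1}/\phi_{2}$ gives a bilinear relation between $\theta$ and $\eta$ which, after the $\beta,\gamma$ normalizations, rearranges into $\eta=\omega(\sigma\theta+1)$ with the claimed $\omega$ and $\sigma$. The symmetry \eqref{sym} is then read off by swapping $(p,\t{u})\leftrightarrow(q,\h{u})$, which simply reflects the CAC symmetry of the underlying BT \eqref{BT}.

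The main obstacle is the algebra in (i)--(ii) for Q2 and Q3, where the matrix entries are quartic in $u,\t{u}$ and involve both cross terms and $\delta$-dependent constants. For these the identity ``coefficient of $\phi_{2}$ equals $-\varepsilon^{2}\mu/\t\mu$'' is not a free polynomial identity but must be checked modulo the quadrilateral equation \eqref{deq}; similarly, the claimed factorizations of $|M|$ as $A$ or $B$ are nontrivial polynomial identities to be verified separately. My plan would be to dispose of H1--A1 and Q1 first (where the algebra is at most cubic), thereby confirming the template, and then treat Q2 and Q3 by expressing all quantities as polynomials in $u,\t{u},\h{u}$ with coefficients in $\mathbb{Z}[p,q,r,\delta]$ and reducing modulo the relevant quadrilateral relation.
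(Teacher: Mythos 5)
Your plan is essentially the paper's own proof: the paper carries out exactly these checks in detail for H1 (the Riccati equation \eqref{Ri-H1} obtained via \eqref{tt-H1} with $\beta=\sqrt{r-p}$, then elimination of $\phi_1/\phi_2$ to get $\eta=\theta+\t{u}-\h{u}$), invokes Propositions \ref{P:1}--\ref{P:3}, and states that H2, H3, A1, Q1, Q2, Q3 follow by the same computation — which is precisely the case-by-case verification you describe. Two cautions only: the coefficient and determinant identities in your steps (i)--(ii) are identities in the \emph{independent} data $u,\t{u},\t{\t{u}}$ (resp.\ $u,\t{u},\h{u}$), which the quadrilateral equation \eqref{deq} never relates — \eqref{deq} enters only through the compatibility of the two Lax equations, i.e.\ in making \eqref{CLs-formal} hold on solutions — so no reduction modulo \eqref{deq} is needed even for Q2 and Q3; and when you use the footnote identity \eqref{Ri-phi2}, note that its $\t{\phi}_2$-coefficient should read $\beta\bigl(\t{C}A/C+\t{D}\bigr)$ (as the H1 computation \eqref{tt-H1} confirms), and that for Q3 this check yields $\t{\mu}\,\t{\theta}\theta=p(u-\t{\t{u}})\theta-\varepsilon^{2}\mu$ with the listed $\mu$, i.e.\ the generic form \eqref{Ri-ger} only after absorbing the constant factor $p$ into $\mu$, which leaves the resulting conservation laws unaffected.
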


There is a transformation\cite{ABS-2003}
\begin{equation}
u=(-1)^{n+m}v \label{trans-Q1-A1}
\end{equation}
connecting Q1 equation and A1 equation
\begin{equation}
p(v+\h{v})(\t{v}+\th{v})-q(v+\t{v})(\h{v}+\th{v})-\delta^2pq(p-q)=0.
\end{equation}
In fact, one can substitute \eqref{trans-Q1-A1} into the infinitely
many conservation laws of Q1 equation to get those of A1 equation.
The obtained conservation laws are as same as those derived through
\eqref{A1-para} (with $v$ in place of $u$).

\subsection{Conservation laws for A2 equation}
\subsubsection{Transformation}

The transformation\cite{ABS-2003}
\begin{equation}
u=v^{(-1)^{n+m}} \label{trans-Q3-A2}
\end{equation}
connects Q3$|_{\delta=0}$ equation
\begin{equation}
(q^2-p^2)(u\th{u}+\t{u}\h{u})+q(p^2-1)(u\t{u}+\h{u}\th{u})-p(q^2-1)(u\h{u}+\t{u}\th{u})=0
\label{Q30}
\end{equation}
and A2 equation
\begin{equation}
(q^2-p^2)(v\t{v}\h{v}\th{v}+1)+q(p^2-1)(v\h{v}+\t{v}\th{v})-p(q^2-1)(v\t{v}+\h{v}\th{v})=0.
\label{A2-v}
\end{equation}
Noting that the conservation law \eqref{CL-def} of equation
\eqref{deq} is a relation that holds for all of $u$ satisfying
\eqref{deq}, for A2 equation \eqref{A2-v} what we need  is to list
out conservation laws of Q3$|_{\delta=0}$ equation \eqref{Q30} and
then replace  $u$ by $v^{(-1)^{n+m}}$.

\begin{prop}\label{P:5}
The infinitely many conservation laws of Q3$|_{\delta=0}$ equation
\eqref{Q30} is given by \eqref{CLs-inf} with $\theta$, $\eta$ and
$\{h_j(\bft)\}$ given in \eqref{theta}, \eqref{eta-theat} and
\eqref{htj}, respectively, and
\begin{subequations}
\begin{align}
& \mu=\sqrt{(u-p \t u)(p u-\t u)},\\
&
\omega=\frac{p(q^2-1)\t{u}+(p^2-q^2)u-q(p^2-1)\h{u}}{(p^2-1)\sqrt{(u-q\h
u)(qu-\h u)}}, ~~ \sigma=\frac{(q^2-1)\sqrt{(u-p \t u)(p u-\t
u)}}{p(q^2-1)\t{u}+(p^2-q^2)u-q(p^2-1)\h{u}}.
\end{align}
\end{subequations}
The infinitely many conservation laws of A2 equation \eqref{A2-v} can be
given through the infinitely many conservation laws of Q3$|_{\delta=0}$
equation \eqref{Q30} by replacing   $u$ by $v^{(-1)^{n+m}}$.
\end{prop}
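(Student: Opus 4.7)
The proposition has two parts: the explicit Q3$|_{\delta=0}$ conservation laws, and their transfer to A2 via the substitution $u=v^{(-1)^{n+m}}$. The plan is, respectively, to specialise the Q3 derivation underlying Proposition \ref{P:4} to $\delta=0$, and then to exploit the fact that \eqref{CL-def} defines a conservation law as an identity on the solution set, so that any such identity can be pulled back along a bijection of solutions.

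For the Q3$|_{\delta=0}$ step, setting $\delta=0$ in \eqref{phi-t-Q3}--\eqref{phi-h-Q3} reduces the normalising scalar to $\sqrt{A}$ with $A=(r^2-1)(r^2-p^2)(u-p\t u)(pu-\t u)$ and kills the $\delta^2$ contribution of the top-right entries. Applying the footnote recipe attached to \eqref{Ri-ger} to this simplified $n$-direction matrix produces a two-step linear recursion for $\phi_2$; dividing through by $\phi_2$ delivers the discrete Riccati equation \eqref{Ri-ger} with the claimed $\mu$ and with $\varepsilon^2=(r^2-p^2)/(r^2-1)$, which is solved by Proposition \ref{P:1}. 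To cast $\eta$ in the form \eqref{eta-theat}, I would use \eqref{phi-t-Q3} and \eqref{phi-h-Q3} at $\delta=0$ to express $\phi_1/\phi_2$ linearly in $\theta$ and in $\eta$ respectively, then eliminate $\phi_1/\phi_2$ and rearrange to read off the stated $\omega$ and $\sigma$. A short algebraic check verifies the symmetry \eqref{sym}, after which Proposition \ref{P:3} outputs the infinitely many conservation laws.

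For the A2 step, it is enough to show that $u(n,m)=v(n,m)^{(-1)^{n+m}}$ sends solutions of A2 to solutions of \eqref{Q30}, for then any identity $\Delta_m F(u)=\Delta_n J(u)$ valid on Q3$|_{\delta=0}$ pulls back to a conservation law of A2. Because $(-1)^{n+m}$ flips sign under each single shift but is preserved under their composition, at a site where $u=v$ one has $\t u=1/\t v$, $\h u=1/\h v$, $\th u=\th v$. Substituting these into \eqref{Q30} and clearing the common denominator $\t v\h v$ reorganises the six bilinear pairs into the quartic form \eqref{A2-v}; the case $u=1/v$ is identical after multiplying through by $v\th v$ in place of $\t v\h v$. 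Once the correspondence is in place, the conservation laws from the first step transfer verbatim after $u\mapsto v^{(-1)^{n+m}}$.

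The only non-routine point is the algebraic simplification in the first step, in particular the elimination of $\phi_1/\phi_2$ in the presence of the square root $\sqrt{A}$: the manipulation is structurally parallel to the $\delta\neq 0$ case behind Proposition \ref{P:4}, but must be carried out with care so that the symmetry \eqref{sym} emerges exactly and $\omega,\sigma$ appear in the forms claimed. The second step is transparent once the alternating-sign bookkeeping is done once and for all by the observation above.
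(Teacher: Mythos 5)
Your proposal is correct and follows essentially the same route as the paper: the Q3$|_{\delta=0}$ data are obtained by setting $\delta=0$ in the Q3 entries of Proposition \ref{P:4} (same $\mu$, $\omega$, $\sigma$, with $\varepsilon^2=(r^2-p^2)/(r^2-1)$), and the A2 laws are obtained by pulling back along $u=v^{(-1)^{n+m}}$ using the fact that a conservation law \eqref{CL-def} is an identity on the solution set. The only difference is cosmetic: the paper cites \cite{ABS-2003} for the transformation connecting Q3$|_{\delta=0}$ and A2, whereas you verify the alternating-sign substitution directly, which is a correct and harmless addition.
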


\subsubsection{Lax pair approach}\label{sec}
Conservation laws of A2 equation can also be derived directly from
its Lax pair which reads
\begin{subequations}\label{Lax-A2}
\begin{align}
\label{phi-t-A2} & \t{\phi}=\frac{\beta}{\sqrt{A}}
\left(\begin{array}{cc}  -r(p^2-1)u & p(r^2-1)u\t{u}-(r^2-p^2) \\
(r^2-p^2)u\t{u}-p(r^2-1) & r(p^2-1)\t{u}
\end{array}
\right)\phi,\\
\label{phi-h-A2} & \h{\phi}=\frac{\gamma}{\sqrt{B}}
\left(\begin{array}{cc}  -r(q^2-1)u & q(r^2-1)u\h{u}-(r^2-q^2) \\
(r^2-q^2)u\h{u}-q(r^2-1) & r(q^2-1)\h{u}
\end{array}
\right)\phi,
\end{align}
\end{subequations}
with
\[ A=(r^2-1)(r^2-p^2)(p-u\t u)(p u\t u-1),~~
 B=(r^2-1)(r^2-q^2)(q-u\h{u})(q u\h{u}-1).\]
Since, in this case, the equation \eqref{Ri-phi2} becomes
\begin{equation}
\sqrt{\t A}\, \t{\t \phi}_2=\beta \frac{ r p (p^2-1)(r^2-1)(u-\t {\t
u})}{p(1-r^2)+(r^2-p^2)u\t u} \t \phi_2 -\beta^2\sqrt{A}\,
\frac{p(1-r^2)+(r^2-p^2)\t u\t {\t u}}{p(1-r^2)+(r^2-p^2)u\t u}
\phi_2,
\end{equation}
which is difficult to get a solvable Riccati equation for
$\theta=\t{\phi}_2/\phi_2$, we turn to another formulae set
\begin{subequations}
\label{theta-eta-A2}
\begin{align}
\label{theta-A2} &
\theta=\frac{\t{\phi}_2}{\phi_2}=\frac{1}{\mu}\bigl[\bigl[(r^2-p^2)u\t{u}-p(r^2-1)\bigr]\zeta+r(p^2-1)\t{u}\bigr],
\\
\label{eta-A2} &
\eta=\frac{\h{\phi}_2}{\phi_2}=\frac{1}{\nu}\bigl[\bigl[(r^2-q^2)u\h{u}-q(r^2-1)\bigr]\zeta
+r(q^2-1)\h{u}\bigr].
\end{align}
By $\theta$ and $\eta$ the formal conservation law is written as
\begin{equation*}
\Delta_m \ln \theta=\Delta_n \ln \eta.
\label{cls-fromal-A2}
\end{equation*}
\eqref{theta-A2} and \eqref{eta-A2} are derived from the Lax pair \eqref{Lax-A2}, where
\begin{equation}
\zeta=\frac{\phi_1}{\phi_2},
\end{equation}
and
\begin{align}\label{mu-nu-A2}
  \mu=\sqrt{(p-u\t u)(p u\t u-1)},~~
\nu=\sqrt{(q-u\h{u})(q u\h{u}-1)},
\end{align}
and we have taken
\begin{align*}
\beta=\sqrt{(r^2-1)(r^2-p^2)},~~ \gamma =\sqrt{(r^2-1)(r^2-q^2)}.
\end{align*}
\end{subequations}
$\zeta$ is determined by the following equation
\begin{equation}\label{bt-A2-1}
\t{\zeta}=\frac{-r(p^2-1)u\, \zeta+p(r^2-1)u\t{u}-(r^2-p^2)}
{\bigl[(r^2-p^2)u\t{u}-p(r^2-1)\bigr]\zeta+r(p^2-1)\t{u}},
\end{equation}
which is derived from \eqref{phi-t-A2}.\\

 To solve \eqref{bt-A2-1},
we take (cf.\cite{Gardner-1})
\begin{equation}
\zeta=\t u+\xi, ~~ \varepsilon= r-p,
\label{zeta-A2}
\end{equation}
and we reach to
\begin{align}
(a_0+a_1\varepsilon+a_2\varepsilon^2)\xi\t \xi+
(b_1\varepsilon+b_2\varepsilon^2)\t \xi+
(c_0+c_1\varepsilon+c_2\varepsilon^2)\xi +
(d_1\varepsilon+d_2\varepsilon^2)=0, \label{Ri-xi-A2}
\end{align}
\begin{subequations}
where
\begin{align*}
& a_0=-p(p^2-1),~~ a_1=2p(u\t{u}-p),~~ a_2=u\t{u}-p,\\
& b_1=(2pu\t{u}-p^2-1)\t{u},~~ b_2=(u\t{u}-p)\t{u},\\
& c_0=p(p^2-1)(u-\t{\t{u}}),~~c_1=2p\t{\t{u}}(u\t{u}-p)+(p^2-1)u,~~c_2=(u\t{u}-p)\t{\t{u}},\\
&
d_1=2p(u\t{u}^2\t{\t{u}}+1)-(p^2+1)(u\t{u}+\t{u}\t{\t{u}}),~~d_2=u\t{u}^2\t{\t{u}}+1-p(u\t{u}+\t{u}\t{\t{u}}).
\end{align*}
\end{subequations}
Equation \eqref{Ri-xi-A2} is then solved by
\begin{subequations}
\[
\xi=\sum^{\infty}_{j=1}\xi_j \varepsilon^j,
\]
with
\begin{align*}
& \xi_1=-\frac{d_1}{c_0},\\
& \xi_2=-\frac{1}{c_0}(a_0\xi_1\t \xi_1+b_1\t \xi_1+c_1 \xi_1+d_2),\\
&
\xi_3=-\frac{1}{c_0}\bigl[a_0(\xi_1\t{\xi}_2+\xi_2\t{\xi}_1)+a_1\xi_1\t{\xi}_1+b_1\t{\xi}_2+b_2\t{\xi}_1+c_1\xi_2+c_2\xi_1\bigr],\\
& \xi_s=-\frac{1}{c_0}\Bigl( \sum^{2}_{k=0}a_k\sum^{s-k-1}_{i=1}
\xi_i \t \xi_{s-k-i}+ \sum^{2}_{k=1}b_k\t \xi_{s-k}+
\sum^{2}_{k=1}c_k \xi_{s-k} \Bigr),~~~ (s=4,5,\cdots).
\end{align*}
\end{subequations}\\
Next, we express \eqref{theta-eta-A2} in terms of $\xi$ and
$\varepsilon$ as
\begin{subequations}
\label{theta-eta-A2-xi}
\begin{align}
\label{theta-A2-xi} &
\theta=\frac{1}{\mu}\bigl[(f_0+f_1\varepsilon+f_2\varepsilon^2)\xi+g_1\varepsilon+g_2\varepsilon^2\bigr],
\\
\label{eta-A2-xi} &
\eta=\frac{1}{\nu}\bigl[(w_0+w_1\varepsilon+w_2\varepsilon^2)\xi +
z_0+z_1\varepsilon+z_2\varepsilon^2\bigr],
\end{align}
with
\begin{align*}
&f_0=p(1-p^2),~~f_1=2p(u\t{u}-p),~~f_2=u\t{u}-p,\\
&g_1=(2pu\t{u}-p^2-1)\t{u},~~g_2=(u\t{u}-p)\t{u},\\
&w_0=(p^2-q^2)u\h u -q(p^2-1),~~w_1=2p(u\h u-q),~~w_2=u\h u-q,\\
&z_0=(p^2-q^2)u\h u\t u-q(p^2-1)\t u + p(q^2-1)\h u,~~z_1=2p\t u(u\h
u-q)+(q^2-1)\h u,~~z_2=(u\h u-q)\t u,
\end{align*}
\end{subequations}
and we can get solutions
\begin{equation}
\theta=\frac{(f_0\xi_1+g_1)\varepsilon}{\mu}
\biggl(1+\sum^{\infty}_{j=1}\theta_j\varepsilon^j\biggr),
\end{equation}
with
\begin{subequations}
\label{theta-j-A2}
\begin{align}
&\theta_1=\frac{f_0\xi_2+f_1\xi_1+g_2}{f_0\xi_1+g_1},\\
&\theta_s=\frac{f_0\xi_{s+1}+f_1\xi_s+f_2\xi_{s-1}}{f_0\xi_1+g_1},~~(s=2,3,\cdots),
\end{align}
\end{subequations}
and
\begin{equation}\eta=\frac{z_0}{\nu}\biggl(1+\sum^{\infty}_{j=1}\eta_j\varepsilon^j\biggr),
\end{equation}
with
\begin{subequations}
\label{eta-j-A2}
\begin{align}
&\eta_1=\frac{1}{z_0}(w_0\xi_1+z_1),\\
&\eta_2=\frac{1}{z_0}(w_0\xi_2+w_1\xi_1+z_2),\\
&\eta_s=\frac{1}{z_0}\sum\limits_{i=0}^{2}w_i\xi_{s-i},~~(s=3,4,\cdots).
\end{align}
\end{subequations}

Finally, by means of the polynomials $\{h_j(\bft)\}$ defined in \eqref{htj}, the infinitely
many conservation laws of A2 equation are given by
\begin{subequations}
\begin{align}
&\Delta_m \ln \frac{f_0\xi_1+g_1}{\mu}=\Delta_n \ln
\frac{z_0}{\nu},\\
&\Delta_m h_s(\boldsymbol\theta)=\Delta_n
h_s(\boldsymbol{\eta}),~~s=1,2,\cdots,
\end{align}
where
\[\boldsymbol\theta=(\theta_1,\theta_2,\cdots),~~~\boldsymbol\eta=(\eta_1,\eta_2,\cdots),\]
with $\mu,$ $\nu,$ $\{\theta_j\}$ and $\{\eta_j\}$ given in
\eqref{mu-nu-A2}, \eqref{theta-j-A2} and \eqref{eta-j-A2}
respectively.
\end{subequations}

\subsection{Conservation laws for Q4 equation}

For Q4 equation, one can use the same method as in Sec.\ref{sec}. Lax
pair of Q4 equation is
\begin{subequations}\label{Lax-Q4}
\begin{align}\label{phi-t-Q4}
 &\t{\phi}=\frac{\beta}{\sqrt{A}}\left(
\begin{array}{cc}
r(1-p^2r^2)u+(pR-rP)\t u & -p(1-p^2r^2)u\t u-pr(pR-rP)\\
p(1-p^2r^2)+pr(pR-rP)u\t u &-r(1-p^2r^2)\t u-(pR-rP)u
\end{array}
\right)\phi,\\
&\label{phi-h-Q4} \h{\phi}=\frac{\gamma}{\sqrt{B}}\left(
\begin{array}{cc}
r(1-q^2r^2)u+(qR-rQ)\h u & -q(1-q^2r^2)u\h u-qr(qR-rQ)\\
q(1-q^2r^2)+qr(qR-rQ)u\h u &-r(1-q^2r^2)\h u-(qR-rQ)u
\end{array}
\right)\phi,
\end{align}
with
\begin{align*}
&A=r(1-p^2r^2)(pR-rP)\bigl[2Pu\t{u}+p^2(u^2\t{u}^2+1)-\t{u}^2-u^2\bigr],\\
&B=r(1-q^2r^2)(qR-rQ)\bigl[2Qu\h{u}+q^2(u^2\h{u}^2+1)-\h{u}^2-u^2\bigr],
\end{align*}
and $(r,R)$ are formulated by the elliptic curve
\begin{equation}
R^2=r^4- k r^2+1.
\label{R-r-A2}
\end{equation}
\end{subequations}
Taking $\beta=\sqrt{r(1-p^2r^2)(pR-rP)}$ and
$\gamma=\sqrt{r(1-q^2r^2)(qR-rQ)}$ in \eqref{Lax-Q4},  we have
\begin{subequations}\label{theta-eta-Q4}
\begin{align}
&\theta=\frac{1}{\mu}\bigl[\bigl[p(1-p^2r^2)+pr(pR-rP)u\t
u\bigr]\zeta
-r(1-p^2r^2)\t u-(pR-rP)u\bigr],\\
&\eta=\frac{1}{\nu}\bigl[\bigl[q(1-q^2r^2)+qr(qR-rQ)u\h u\bigr]\zeta
-r(1-q^2r^2)\h u-(qR-rQ)u\bigr],
\end{align}
where
\begin{equation}
\zeta=\frac{\phi_1}{\phi_2}, \label{zeta-Q4}
\end{equation}
 and
\begin{equation}\label{mu-nu-Q4}
\mu=\sqrt{2Pu\t{u}+p^2(u^2\t{u}^2+1)-\t{u}^2-u^2},~~\nu=\sqrt{2Qu\h{u}+q^2(u^2\h{u}^2+1)-\h{u}^2-u^2}.
\end{equation}
\end{subequations}
The formal conservation law is given by
\begin{equation}
\Delta_m \ln \theta=\Delta_n \ln \eta.
\label{cls-fromal-Q4}
\end{equation}
$\zeta$ is determined by the following Riccati equation
\begin{equation}\label{bt-Q4-1}
\t \zeta(c\zeta +d)=a\zeta +b,
\end{equation}
which is derived from \eqref{phi-t-Q4}, and
here
\begin{align*}
&a=r(1-p^2r^2)u+(pR-rP)\t u,~~b=-p(1-p^2r^2)u\t u-pr(pR-rP),\\
&c=p(1-p^2r^2)+pr(pR-rP)u\t u,~~d=-r(1-p^2r^2)\t u-(pR-rP)u.
\end{align*}
To solve it we take (cf.\cite{Gardner-1})
\begin{equation}
\zeta=\t u+\xi,~~\varepsilon=r-p,
\label{zeta-xi-Q4}
\end{equation}
and expand $a, b, c, d$ as
\begin{equation}
a=\sum^{\infty}_{i=0}a_i\varepsilon^i,~b=\sum^{\infty}_{i=0}b_i\varepsilon^i,
~c=\sum^{\infty}_{i=0}c_i\varepsilon^i,~d=\sum^{\infty}_{i=0}d_i\varepsilon^i.
\label{abcd-Q4}
\end{equation}
Since $R$ defined in \eqref{R-r-A2} can be expanded as
\begin{equation}
R=\sum^{\infty}_{i=0}r_i\varepsilon^i,
\end{equation}
in which $r_i$ is given by
\begin{align*}
r_i=P\sum\limits_{||\balpha||=i}\frac{\bfg^{\balpha}}{\balpha
!}\prod_{i=0}^{|\balpha|-1}(\frac{1}{2}-i),
\end{align*}
where
\begin{align*}
&\bfg=(g_1,g_2,g_3,g_4),~~\balpha=(\alpha_1,\alpha_2,\alpha_3,\alpha_4),~~\alpha_i\in\{0,1,2,\cdots\}\\
&g_1=\frac{2}{P}(2p^3- k p),~g_2=\frac{1}{P}(6p^2- k),~g_3=\frac{4p}{P},~g_4=\frac{1}{P},\\
&||\balpha||=\sum_{i=1}^{4}j\alpha_j,~~|\balpha|=\sum_{i=1}^{4}\alpha_i,~~\balpha!=\prod_{i=1}^{4}(\alpha_i!),
~~\bfg^{\balpha}=\prod_{i=1}^4g_i^{\alpha_i},
\end{align*}
we have
\begin{align*}
&a_0=p(1-p^4)u,~~a_1=(1-3p^4)u+(pr_1-P)\t u,~~a_2=-3p^3u+pr_2\t u,~~ a_3=-p^2u+pr_3\t u,\\
& a_i=pr_i\t u,~~(i=4,5,\cdots),\\
&b_0=-p(1-p^4)u\t u,~~b_1=2p^4u\t u-p^2(pr_1-P),~~b_2=p^3u\t u-p(pr_1-P)-p^3r_2,\\
& b_i=-p^2(pr_i+r_{i-1}),~~(i=3,4,\cdots),\\
&c_0=p(1-p^4),~~c_1=p^2(pr_1-P)u\t u-2p^4,~~c_2=\bigl[p(pr_1-P)+p^3r_2\bigr]u\t u-p^3,\\
& c_i=p^2(pr_i+r_{i-1})u\t u,~~(i=3,4,\cdots),\\
&d_0=-p(1-p^4)\t u,~~d_1=-(1-3p^4)\t u-(pr_1-P)u,~~d_2=3p^3\t u-pr_2u,~~ d_3=p^2\t u-pr_3u,\\
&d_i=-pr_iu,~~(i=4,5,\cdots).
\end{align*}
Consequently, \eqref{bt-Q4-1} turns out to be
\begin{align}\label{Ri-xi-Q4}
\sum_{i=0}^{\infty}c_i\varepsilon^i\xi\t \xi
+\sum_{i=1}^{\infty}(c_i\t u+d_i)\varepsilon^i\t \xi
+\sum_{i=0}^{\infty}(c_i\t{\t{u}}-a_i)\varepsilon^i\xi
+\sum_{i=1}^{\infty}\bigl[(c_i\t u +d_i)\t{\t{u}}-a_i\t
u-b_i\bigr]\varepsilon^i=0.
\end{align}
This is solved by
\begin{subequations}
\begin{align*}
\xi=\sum^{\infty}_{j=1}\xi_j \varepsilon^j,
\end{align*}
with
\begin{align*}
&\xi_1=\frac{-1}{c_0\t{\t{u}}-a_0}\bigl[(c_1\t u
+d_1)\t{\t{u}}-a_1\t
u-b_1\bigr],\\
&\xi_s=\!\frac{-1}{c_0\t{\t{u}}-a_0}
\Bigl[\sum\limits_{k=0}^{s-2}\sum_{i=1}^{s-k-1}\!\!\!c_k\xi_i\t
\xi_{s-k-i}\!+\!\!\sum\limits_{i=1}^{s-1}\!\big[(c_i\t u+d_i)\t
\xi_{s-i}\!+\!(c_i\t{\t{u}}-a_i)\xi_{s-i}\big]\!\!+\!(c_s\t u
+d_s)\t{\t{u}}\!-a_s\t u\!-b_s\Bigr],
\end{align*}
\end{subequations}
for $s=2,3,\cdots$.

 Next, from \eqref{zeta-xi-Q4} and \eqref{theta-eta-Q4} one has
\begin{subequations}
\begin{align}
&\theta=\frac{1}{\mu}\biggl(\sum_{i=0}^{\infty}c_i\varepsilon^i\xi+\sum_{i=0}^{\infty}f_i\varepsilon^i\biggr),\\
&\eta=\frac{1}{\nu}\biggl(\sum_{i=0}^{\infty}w_i\varepsilon^i\xi+\sum_{i=0}^{\infty}z_i\varepsilon^i\biggr),
\end{align}
where
\begin{align*}
&f_0=0,~~f_i=c_i\t u+d_i,~~(i=1,2,\cdots),\\
&w_0=q(1-q^2p^2)+qp(qP-pQ)u\h u,~~w_1=-2pq^3+q(qP+pqr_1-2pQ)u\h
u,\\
&w_2=-q^3+q(qpr_2+qr_1-Q)u\h u,~~w_i=q^2(pr_i+r_{i-1})u\h
u,~~(i=3,4,\cdots),\\
&z_0=-p(1-q^2p^2)\h u-(qP-pQ)u+w_0\t u,~~z_1=(3p^2q^2-1)\h u-(qr_1-Q)u+w_1\t u,\\
&z_2=3pq^2\h u-qr_2u+w_2\t u,~~z_3=q^2\h u-qr_3u+w_3\t
u,~~z_i=-qr_iu+w_i\t u,~~(i=4,5,\cdots).
\end{align*}
\end{subequations}
Further
\begin{equation}
\theta=\frac{(c_0\xi_1+f_1)\varepsilon}{\mu}\biggl(1+\sum_{j=1}^{\infty}\theta_j\varepsilon^j\biggr),
\end{equation}
where
\begin{equation}
\theta_j=\frac{1}{c_0\xi_1+f_1}\biggl(\sum_{k=0}^jc_k\xi_{j+1-k}+f_{j+1}\biggr),~~(j=1,2,\cdots),
\label{theta-j-Q4}
\end{equation}
and
\begin{equation}
\eta=\frac{z_0}{\nu}\biggl(1+\sum_{j=1}^{\infty}\eta_j\varepsilon^j\biggr),
\end{equation}
where
\begin{equation}
\eta_j=\frac{1}{z_0}\biggl(\sum_{k=0}^{j-1}w_k\xi_{j-k}+z_j\biggr),~~(j=1,2,\cdots).
\label{eta-j-Q4}
\end{equation}
Finally, by means of the polynomials $\{h_j(\bft)\}$ defined in
\eqref{htj}, from the formal conservation law \eqref{cls-fromal-Q4},
the infinitely many conservation laws of Q4 equation are given by
\begin{subequations}
\begin{align}
&\Delta_m \ln \frac{c_0\xi_1+f_1}{\mu}=\Delta_n \ln
\frac{z_0}{\nu},\\
&\Delta_m h_s(\boldsymbol\theta)=\Delta_n
h_s(\boldsymbol\eta),~~~s=1,2,\cdots,
\end{align}
where
\[\boldsymbol\theta=(\theta_1,\theta_2,\cdots),~~~\boldsymbol\eta=(\eta_1,\eta_2,\cdots),\]
with $\mu$, $\nu$, $\theta_j$ and $\eta_j$ are given by
\eqref{mu-nu-Q4}, \eqref{theta-j-Q4} and \eqref{eta-j-Q4},
respectively.
\end{subequations}

\section{Conclusions}

We have shown that infinitely many conservation laws of ABS lattice
equations can be derived from their Lax pairs. We generalized the
approach  used in  \cite{zc}.
From a discrete (two by two) Lax pair it is easy to write out a
formal conservation law. We found a generic  discrete Riccati
equation \eqref{Ri-ger} that is shared by H1, H2, H3, Q1, Q2, Q3 and A1 equation.
This generic  Riccati equation is derived from their Lax pairs. It
provides a series-form  solution for $\theta$, and with the help of
polynomials $\{h_j(\bft)\}$ defined in \eqref{htj}, the infinitely
many conservations laws can be expressed both algebraically and
explicitly. We also want to emphasise that the value of $\beta$ that
we choose is important for getting the solvable generic discrete
Riccati equation,
while in Gardner method $\beta$ is cancelled in the ratio form $\b u=\phi_1/\phi_2$.
Besides, we also note that if we conduct the same procedure starting from $(q,\widehat{~}\,)$ part of Lax pairs,
we will get same conservation laws due to the symmetric property \eqref{sym}.
A2 and Q4 equation seem to be special and so far
we do not know whether their Riccati equations fall in the same
generic  form \eqref{Ri-ger}. For them we derive their conservation laws by using
the approach used for the Ablowitz-Ladik system\cite{zc}. This is closely
related to Gardner method because for a CAC equation, its Lax pair
is obtained by just taking $\b u=\phi_1/\phi_2$ in its BT. However,
starting from Lax pairs gives naturally the formal (initial)
conservation law.
Compared with \cite{Gardner-1}, our formal conservation laws and
the initial conservation laws in \cite{Gardner-1} are same for H1, H2, H3, Q2 and Q3 equation,
while for A1, A2, Q1 and Q4 equation, they are different.
Our approach   can apply to other
multidimentionally consistent systems such as NQC
equation\cite{NQC-1983}, discrete Boussinesq type
equations\cite{Hietarinta-Bous,ZZN-Bous} and so on.

\vskip 10pt
\subsection*{Acknowledgments}
The author (Zhang) are very grateful to Prof. C.W. Cao for his enlightening and enthusiastic
discussion on discrete Lax pairs.
This project is supported by the NSF of China (No. 11071157), SRF of
the DPHE of China (No. 20113108110002) and Shanghai Leading Academic
Discipline Project (No. J50101).


\vskip 36pt
{\small

}

\end{document}